\documentclass[a4paper,UKenglish,cleveref,thm-restate,pdfa]{lipics-v2021}
\usepackage[draft]{preamble}

\title{SimdQuickHeap: The QuickHeap Reconsidered}

\author{Johannes Breitling}{Karlsruhe Institute of Technology, Germany}{johannes.breitling@student.kit.edu}{https://orcid.org/0009-0000-9706-0074}{}

\author{Ragnar Groot Koerkamp}{Karlsruhe Institute of Technology, Germany}{ragnar.grootkoerkamp@gmail.com}{https://orcid.org/0000-0002-2091-1237}{}
\author{Marvin Williams}{Karlsruhe Institute of Technology, Germany}{marvin.williams@kit.edu}{https://orcid.org/0000-0002-1990-0899}{}

\ccsdesc{Theory of computation~Data structures design and analysis}

\authorrunning{J. Breitling, R. Groot Koerkamp, and M. Williams}
\Copyright{Johannes Breitling, Ragnar Groot Koerkamp, and Marvin Williams}

\supplement{}

\supplementdetails[subcategory={Source Code}, cite=quickheap-repo, swhid={swh:1:dir:2cead1c11619177756143a9713ff21207d29fabd}]{Software}{https://github.com/RagnarGrootKoerkamp/quickheap}

\acknowledgements{
  We thank Peter Sanders for discussions related to this work and feedback on a
  draft of this paper.
}

\keywords{Heap, SIMD, Priority Queue}

\EventEditors{Philip Bille, Seth Pettie, and Sabine Storandt}
\EventNoEds{3}
\EventLongTitle{34th Annual European Symposium on Algorithms (ESA 2026)}
\EventShortTitle{ESA 2026}
\EventAcronym{ESA}
\EventYear{2026}
\EventDate{August 31--September 4, 2026}
\EventLocation{L'Aquila, Italy}
\EventLogo{}
\SeriesVolume{388}
\ArticleNo{16}

\begin{document}

\maketitle

\begin{abstract}%
\vspace{-2em}
\subparagraph*{Motivation.}
Priority queues are data structures that maintain a dynamic collection of elements and allow
inserting new elements and removing the smallest element.
The most widely known and used priority queue is likely the implicit binary heap,
even though it has frequent cache misses and is hard to optimize using e.g.
SIMD instructions.

\parag{Contributions.}
We introduce the SimdQuickHeap, a variant of the QuickHeap that was introduced by Navarro and Paredes in 2010.
As suggested by the name, the data structure bears some similarity to QuickSort.
We modify the data layout of the original QuickHeap
to have all \emph{pivots} adjacent in memory, with
elements between consecutive pivots stored in dedicated \emph{buckets}.
This allows efficient SIMD implementations for both partitioning of buckets
and scanning the list of pivots
to find the bucket to append newly inserted elements to.

The SimdQuickHeap has amortized expected complexity $O(\log n)$ per operation,
which improves to $O(\frac 1W\log n)$ in non-degenerate cases, where $W$ is the number of words in a SIMD register.
In this case, the I/O-complexity is amortized $O(\frac 1B)$ per \Op{push} and $O(\frac 1B \log_2 \frac nM)$ per \Op{pop}.

\parag{Results.}
In synthetic benchmarks, the SimdQuickHeap is $1.2\times$ to $1.7\times$ as fast
as the monotone radix heap, the next-best competitor, and $1.4\times$ to $2.8\times$ as fast as the superscalar sample queue, the fastest comparison-based priority queue.
The SimdQuickHeap needs around $1.5\log_2 n$
comparisons and $\log_2 n$ nanoseconds per pair of push and pop operations.
On graph benchmarks with Dijkstra's shortest path algorithm and Jarn{\'i}k--Prim's
minimum spanning tree algorithm, the SimdQuickHeap is consistently the fastest.
\end{abstract}


\section{Introduction}\label{intro}
The Priority Queue (PQ) is a fundamental data structure that manages a collection of elements with associated priorities, allowing the insertion of new elements and the extraction of the element with the highest priority.
PQs are central to a wide range of algorithms that rely on dynamically reordering and prioritizing tasks, most famously Dijkstra's shortest-path algorithm~\cite{dijkstra59}, but also job scheduling, discrete event simulation, and branch-and-bound searches.
As a result, PQs have been studied extensively, resulting in a large variety of priority queue designs (see \cref{sec:related}).

Despite being among the oldest designs, (implicit) binary heaps are probably still the most widely used priority queues in practice\footnote{They are available in the standard libraries of many programming languages, including C\texttt{++}, Java, C\#, Go, Rust, and Python.}.
There are compelling reasons for their prevalence: they are conceptually simple, easy to implement, and have $\Theta(\log_2 n)$ worst-case running time with small constants for both insertions and deletions.
While many theoretically superior designs have been proposed, most of them are prohibitively complex for practical use, and surprisingly few designs that are faster in practice have emerged.

One such design is the \emph{QuickHeap}~\cite{quickheap}, introduced in 2010 by Navarro and Paredes.
It uses a recursive partitioning scheme reminiscent of quicksort~\cite{quicksort}, hence the name%
\footnote{Although not a heap in the tree-like sense, the layers between the pivots (\cref{fig:heaps}) do form a linear tree with unordered sets of elements as nodes.}.
However, it has received little attention in the literature.
\Cref{fig:heaps} shows a schematic overview of the QuickHeap among other priority queue designs.
We show that, due to its partitioning scheme, the QuickHeap lends itself to the SIMD (\emph{single instruction, multiple data}) paradigm much more naturally than binary heaps and most other heaps based on merging trees or streams of elements.
SIMD instructions are steadily becoming faster, more efficient, more powerful, and are capable of operating on wider registers.
In light of these advancements, we argue that the QuickHeap should be reconsidered.

\begin{figure}[h]
	\centering
	\includegraphics[width=.9\linewidth]{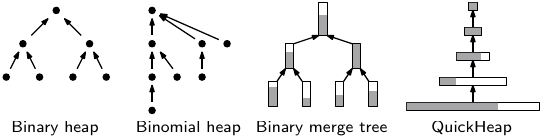}
	\caption{\label{fig:heaps}%
		Overview of some heaps.
		Black dots indicate single elements, and arrows point from larger to smaller elements.
		Vertical blocks indicate a sorted list of elements, while horizontal blocks indicate an unsorted list of elements.
		The QuickHeap stores pivots separating the buckets explicitly.
	}
\end{figure}

\parag{Contributions.}
We present \emph{SimdQuickHeap}, an adaptation of the QuickHeap that achieves significantly improved performance through optimized data layout and SIMD instructions.
To our knowledge, it is the first practical SIMD-optimized priority queue implementation.

The \Op{push} and \Op{pop} operations have an amortized worst-case run time bound of $\BigO(\log n)$ and $\BigO(1)$, respectively.
In most practical cases, the SIMD implementation speeds up both operations by a factor of the SIMD register width \(W\).
Like the QuickHeap, the SimdQuickHeap is cache-oblivious with an amortized
I/O-complexity of amortized $\BigO(\tfrac 1B \log\tfrac nM)$ per
operation for a main memory of $M$ words and block size $B$, assuming $M =\Omega(\log n)$.

On synthetic benchmarks, the SimdQuickHeap is $1.2\times$ to $1.7\times$ as fast as
a state-of-the-art implementation of the radix heap~\cite{radix-heap-cpp} and
$1.4\times$ to $2.8\times$ as fast as the superscalar sample queue~\cite{superscalar-queue}.
On real-world instances of Dijkstra's shortest path algorithm~\cite{dijkstra59} and Jarn{\'i}k--Prim's minimum-spanning-tree algorithm~\cite{boruvka26,jarnik30,prim57}, it is consistently the fastest method.
Notably, while the run time of most tree-based and pointer-based priority queues grows relative to the $\Omega(\log_2 n)$ lower bound as $n$ increases, SimdQuickHeap's run time improves, falling below $\log_2 n$ nanoseconds per \Op{push}-\Op{pop} pair.

\parag{Notation and definitions.}
Given a universe of elements \(\mathcal{U}\) and a strict weak ordering \(<\) on \(\mathcal{U}\cup \{-\infty,\infty\}\), the priority queue (PQ) data structure is a dynamic collection of elements \(\mathcal{Q}\) over \(\mathcal{U}\) (with duplicates permitted) that supports the following two operations:
\begin{description}
	\item[push(e)] Insert an element \(e\in \mathcal{U}\) into \(\mathcal{Q}\).
	\item[pop()] Remove and return a minimal element \(e\in\mathcal{Q}\), i.e., no \(e'\in \mathcal{Q}\) with \(e'<e\) exists.
\end{description}

For convenience, we assume \Op{top} (\Op{pop} without removing) and \Op{size} (returns \(\lvert\mathcal{Q}\rvert\)) to also be available.
We do not consider a \Op{decreaseKey} operation that replaces
arbitrary elements in the PQ by smaller ones, since it usually requires stable references to elements, adding implementation complexity and run-time overhead.
While our definition admits distinct, equivalent elements, we refer to all equivalent elements as \emph{equal} elements for simplicity.

\section{The Original QuickHeap}\label{sec:quickheap}
\enlargethispage{2em}
In 2006, Paredes and Navarro presented \emph{incremental quick select}~\cite{optimal-incremental-sorting}, an algorithm that takes a list of $n$ elements and then repeatedly returns the next-smallest element, such that returning the $k$ smallest elements in an online setting takes $\BigO(n+k\log k)$ time.
The authors note that their construction can be generalized to a proper priority queue.
Consequently, in follow-up work from 2010, they present the QuickHeap~\cite{quickheap}.
The QuickHeap organizes the elements in disjoint \emph{buckets} \(B_1,\ldots,B_\ell\) that are separated by \emph{pivots} \(p_1>\ldots>p_{\ell-1}\) with
\[
	\infty \succ B_1 \succeq p_1 \succ B_2 \succeq p_2 \succ \dots \succeq p_{\ell-1} \succ B_\ell \succ -\infty,
\]
where $\succeq$ ($\succ$) indicates that the $\geq$ ($>$) inequality holds for all elements in the sets.
For the sake of simplicity, in the following we assume that the PQ contains no two equal elements.
See \cref{scalar-design} for details on how equal elements are handled in our design.
The \Op{pop} operation picks a new pivot \(p_\ell\) (e.g., uniformly at random) from the \emph{bottom} bucket \(B_\ell\) and moves all elements in \(B_\ell\) smaller than \(p_\ell\) into a new bucket \(B_{\ell+1}\).
If the new bucket is empty, it is discarded and a new pivot is selected.
This process is repeated recursively in \(B_{\ell+1}\) until the bottom bucket contains only one (the smallest) element.
Finally, this element is returned and its bucket is removed.
The \Op{push} operation simply inserts the element into the correct bucket according to the pivots.

The original implementation~\cite{quickheap} stores all buckets and pivots in an interleaved fashion in a flat array, with an additional array to track the positions of the pivots.
Splitting a bucket is implemented analogously to partitioning in quicksort.
To make room for a new element in a bucket \(B_i\), all buckets with index smaller than \(i\) are shifted one slot to the right by moving two elements per bucket.
The data structure is stored in a circular buffer, so that removing the smallest element only requires incrementing a pointer.
\Cref{fig:quickheap-old2} shows the data layout and illustrates the \Op{push} and \Op{pop} operations.

\begin{figure}[b]
	\centering
	\includegraphics[width=.9\linewidth]{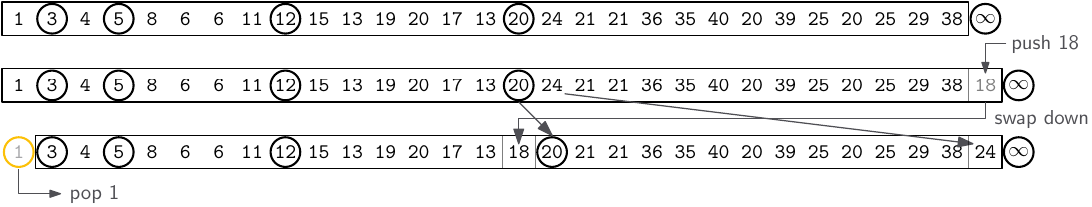}
	\caption{\label{fig:quickheap-old2}%
		Overview of the QuickHeap data structure~\cite{quickheap}.
		The data is stored in a single buffer.
		Pivots are circled and partition the data into smaller (left) and larger (right) elements. Their positions are stored in a separate list (not shown).
		An element is pushed by appending it at the end and then sifting it down until it is less than the corresponding pivot.
		An element is popped by repeatedly partitioning the bottom layer and then removing the smallest element.
	}
\end{figure}

The \Op{push} and \Op{pop} operations of the QuickHeap are both in \(\BigO(\log n)\) \emph{in expectation over uniformly distributed input}~\cite{quickheap}:
the original QuickHeap can degenerate under specific input distributions.
For example, when the inserted elements are decreasing, the number of buckets/pivots grows linearly over time, causing a run time of \(\BigO(n)\) for insertions.
Follow-up work introduces \emph{Randomized QuickHeaps}~\cite{stronger-quickheaps}, which addresses this issue by probabilistic repartitioning:
each time a bucket of size \(s\) is shifted during an insertion, with probability \(1/s\), the Randomized QuickHeap unites the bucket with all newer buckets by dropping the corresponding pivots.
While the Randomized QuickHeap achieves \(\BigO(\log n)\) operations in expectation for \emph{any} input, it adds significant overhead for benign cases.
Very recently, Brodal et al.~\cite{partition-based-simple-heaps} introduced more sophisticated rebalancing schemes (see \cref{sec:related}).
However, these are not directly practical since they rely on linked lists to split and join buckets efficiently.

\section{Related work}\label{sec:related}
There is a vast body of literature on priority queues that dates back to the 1950s.
Here, we highlight work on practical priority queues and focus on RAM-model complexity, I/O-complexity, the number of comparisons, and the running time.
For a broader overview, we refer to the survey by Brodal~\cite{priority-queue-survey} from 2013.
A survey on the performance of classic priority queues is given by Larkin et al.~\cite{back-to-basics-priority-queues}.
The QuickHeap does not appear in either of these surveys, highlighting the little attention it has received.

\parag{Complexity lower bounds.}
There is a strong connection between sorting and priority queues~\cite{equivalence-priority-queues-sorting}.
Due to the lower bound for comparison based sorting, \(n\) insertions followed by \(n\) deletions must take at least \(n\log_2 n\) comparisons for a comparison-based priority queue.
A \Op{pop} followed by a \Op{push} requires at least \(\log_2 n\) comparisons in the worst case amortized over many operations~\cite{complexity-of-priority-queue-maintenance}.

\parag{Binary and $d$-ary heap.}
A popular priority queue implementation is the binary heap using an implicit array representation~\cite{heapsort}.
The binary heap organizes the elements as a binary tree, maintaining the \emph{heap invariant} that no element can be larger than its parent in the tree (unless it is the root).
Thus, no element can be smaller than the root element.
A binary heap can be implemented efficiently using the \emph{Eytzinger} or \emph{heap layout}, where the elements are stored consecutively in a flat array level by level starting from the root.
Consequently, the children of node \(i\) are stored at positions \(2i\) and \(2i+1\) in the array (\(1\)-based indexing).
Both the $\push$ and $\pop$ operations take \(\mathcal{O}(\log_2(n))\) time.

The \(d\)-ary heap generalizes the binary heap so that each node has up to \(d\) children.
The main benefits are better cache locality and fewer levels to traverse.
This comes at the cost of needing to find the minimum of \(d\) elements on each level when deleting an element.

\parag{Amortized heaps.}
A number of heaps reduce the $\BigO(\log n)$ cost of binary heap insertions.
Binomial heaps (\cref{fig:heaps})~\cite{binomial-heap,binomial-heap-analysis} have amortized $\BigO(1)$ insertions. Fibonacci heaps~\cite{fibonacci-heap} and pairing heaps~\cite{pairing-heap} reduce this to truly constant time insertions, but have only amortized constant time deletions.
The Brodal heap~\cite{brodal-queue} (1996) has $\BigO(1)$ insertions and non-amortized $\BigO(\log n)$ deletions, but has a large hidden constant.
Other non-amortized structures are the cascade heap~\cite{cascade-heap}, which has $\BigO(1)$ insertions and $\BigO(\log^* n + \log k)$ for the $k$'th delete, and the logarithmic funnel~\cite{log-funnel-heap}, which has $\BigO(\log^* n)$ insertions and $\BigO(\log n)$ deletions.

Unfortunately, all these structures are pointer-based and inefficient in practice.
The weak heap and weak queue are more practical variants of these ideas that can be implemented using a flat array~\cite{weak-heap-sort,benchmarking-weak-heaps,improved-weak-heap,engineering-weak-heaps}.

\parag{decreaseKey.}
Some of the priority queues mentioned above also support \Op{decreaseKey}, which lowers the key associated with a given element.
Typically (but not always~\cite{buffer-heap}), data structures supporting this require \emph{pointer stability} of the elements and thus an additional level of indirection, hurting performance.
We do not consider these further, but note that in the pointer-based model there is a
lower bound of $\Omega(\log \log n)$
for \Op{decreaseKey}~\cite{decrease-key-lower-bound} that is attained by slim and smooth heaps~\cite{slim-smooth-heap-analysis,self-adjusting-heaps}.

\parag{I/O-model.}
Given that optimal priority queues for the RAM-model were already found early on, further research went into obtaining a good I/O-complexity~\cite{io-complexity-sorting}.
In this setting, we are given an $M$-word internal memory and must minimize the number of transfers of $B$-word blocks to/from external memory.
The lower bound for priority queues matches that of sorting, $\Omega(\tfrac 1B \log_{M/B} \tfrac nB)$ amortized transfers per operation~\cite{io-complexity-sorting}, and a data structure matching this is given by Jiang and Larsen~\cite{external-memory-priority-queue-decreasekey}.
Practical implementations of I/O-optimal priority queues are the array
heap~\cite{array-heap} and sequence-heap~\cite{sequence-heap}, which improves
the constant factor.
The QuickHeap has slightly higher I/O cost of \(\BigO((1/B) \log_2 (n/M))\) for the \Op{push} and \Op{pop} operations~\cite{quickheap}.
With additional support for \Op{decreaseKey}, the lower bound increases to $\Omega(\tfrac 1B \log B / \log \log n)$~\cite{decreasekey-expensive}, and $\BigO(\tfrac 1B \log \tfrac nB /\log\log n)$ is achieved by~\cite{external-memory-priority-queue-decreasekey}.
A \emph{cache-oblivious}~\cite{cache-oblivious-algorithms} priority queue with
optimal amortized complexity that does not rely on the value of $M$ and $B$ is given by~\cite{cache-oblivious-priority-queue,cache-oblivious-priority-queue-extended}, with improvement to $\BigO(1/B)$ insertions in~\cite{external-memory-priority-queue-optimal-insertions}.
The buffer heap~\cite{buffer-heap-spaa,buffer-heap} supports \Op{decreaseKey} in $\BigO(\tfrac 1B \log_2 \tfrac nM)$ transfers and was discovered in parallel with the bucket heap~\cite{bucket-heap}.
The funnel heap~\cite{funnel-heap} is another optimal cache-oblivious queue that relies on \emph{binary mergers} (\cref{fig:heaps}) with strategically chosen buffer sizes.


\parag{Other priority queues.}
There are also data structures that exploit integer keys.
These data structures are typically limited to workloads where the deleted elements increase monotonically.
We refer to the survey by Costa et al.~\cite{monotone-priority-queue-review} for an overview.
The radix heap~\cite{radix-heap} inserts keys into a bucket based on the largest
bit in which they differ from the current minimum, and has amortized complexity
$\BigO(\log C)$ per operation when inserted keys are at most $C$ larger than the
current minimum. Like the QuickHeap, bucket sizes grow roughly exponentially.
Further examples are bucket queues~\cite{bucket-queue,two-level-bucket-queue} (similar to bucket sort) and
$\BigO(\log w)$ dynamic predecessor structures for $w$-bit keys like Van Emde Boas trees~\cite{van-emde-boas-trees}, Y-fast tries~\cite{xy-fast-trie}, and fusion trees~\cite{fusion-trees}.

\parag{Practical/engineered priority queues.}
Given the plethora of theoretical results, there are surprisingly few engineered implementations.
Sequence heaps (2000)~\cite{sequence-heap} provide an optimized I/O-efficient implementation based on merging sorted lists, with the drawback that elements are eagerly sorted, even when they are never removed.
The QuickHeap (2010)~\cite{quickheap} is indeed much faster when only few elements are removed, and competitive on synthetic benchmarks, but in theory is less I/O-efficient than the sequence heap.
The superscalar sample queue (S${}^3$Q, 2021)~\cite{superscalar-queue} is a
much more recent engineered data structure
based on super scalar sample sort~\cite{super-scalar-sample-sort}, that uses $k$-way partitioning instead of $k$-way merging to reach the optimal I/O-complexity.
In particular, partitioning tends to be easier to optimize than merging, even though $k$-way splitting is harder to do evenly than data-independent $k$-way merging.

One further engineered data structure is the B-heap~\cite{b-heap}, which implements an implicit binary heap using a recursive layout like the Van Emde Boas tree~\cite{van-emde-boas-trees}.

\enlargethispage{-2em}

\parag{Partition-based heaps.}
Most methods above are based on variously shaped trees with the heap property.
Some methods, like the QuickHeap, instead use a
path-shaped (linear) heap of buckets, where pivots (implicitly) partition the
data into buckets.
The priority queue of~\cite{cache-oblivious-priority-queue,cache-oblivious-priority-queue-extended} uses exponentially growing layers where each layer contains an increasing number of sorted buffers of increasing size.
The buffer heap~\cite{buffer-heap-spaa,buffer-heap} is conceptually simpler: each layer contains an unsorted list of elements.
The superscalar sample queue combines these ideas, and each level contains $k$ unsorted buffers.
Brodal et al. \cite{partition-based-simple-heaps} recently revisited this
concept and provide \emph{rebalancing strategies} to ensure the number of layers is
guaranteed to be logarithmic.
In one strategy, two consecutive buckets $B_{i-1}$ and $B_i$ are merged after a $\pop$ if they contain fewer elements in total than all previous buckets combined, i.e., $|B_i| + |B_{i-1}| < |B_1| + |B_2| + \dots + |B_{i-2}|$.
Under this rule, the number of layers is bounded by $1+2\log_2 n$.
Another strategy is to maintain the invariant $|B_i| < 3\cdot 2^i$, resulting in at most $1+\log_2 n$ layers.
Both strategies use a linear time pivot selection algorithm to partition the buckets into the right sizes.
Elements are stored in a single linked list so that merging partitions takes constant time.




\section{The new SimdQuickHeap}\label{simd-quickheap}
We first describe the basic scalar design of the \emph{SimdQuickHeap} based on the original QuickHeap (\cref{sec:quickheap}) and analyze its run time.
We then discuss improvements using SIMD.

\subsection{Scalar Design}\label{scalar-design}
\begin{figure}
	\centering
	\includegraphics[width=.9\linewidth]{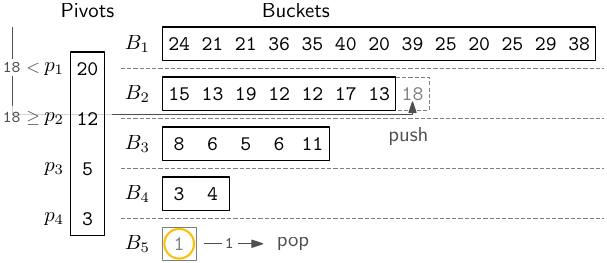}
	\caption{\label{quickheap-new}%
		A schematic overview of the SimdQuickHeap:
		each bucket is a separate array, and
		the pivots are stored in their own array in sorted order.
	}
\end{figure}
Our variant of the QuickHeap keeps an explicit array of all pivots in decreasing order, as shown on the left in \cref{quickheap-new}.
The buckets are stored in separate buffers, rather than as a single allocation.
The \Op{push} operation classifies each element using a binary search on the pivot
array and then appends it to the respective bucket.
To partition the bottom bucket for the \Op{pop} operation,
a \emph{pivot} is selected and
all elements smaller than the pivot are moved to a newly allocated bucket, while
compactifying the remaining elements in-place, as shown in \cref{alg:pushpop}.
In case we selected the smallest element in the bucket as a pivot, no element moves to the new bucket, so we pick a new pivot and start over.
However, if all elements in the bucket are equal, this does not terminate.
For this reason, we treat elements equal to the pivot specially.
\begin{algorithm}[t]
	\centering
	\caption{Pseudo code for the \texttt{SimdQuickHeap} data structure.
		The \Op{pop} operation assumes that it is not empty.\label{alg:pushpop}}
	\begin{algorithmic}
		\State \textbf{structure} $\textproc{SimdQuickHeap}\langle T \rangle$
		\State \hspace{\algorithmicindent} $\mathit{pivots}: \texttt{Vec}\langle T \rangle$, a vector of decreasing pivots
		\State \hspace{\algorithmicindent}
		$\mathit{buckets}: \texttt{Vec}\langle \texttt{Vec}\langle T \rangle
			\rangle$, a vector of bucket vectors
	\end{algorithmic}
	\begin{algorithmic}[1]
		\Procedure{push}{$x$}
		\State $i \gets \textproc{classify}(x, \mathit{pivots})$\Comment{Binary
			search the first pivot $\leq x$.}
		\State $\mathit{buckets}[i].\textproc{pushBack}(x)$
		\EndProcedure
	\end{algorithmic}
	\begin{algorithmic}[1]
		\Procedure{pop}{}
		\While{$\mathit{buckets}.\textproc{last}().\textproc{len}() > 1$}
		\Comment{Partition the bottom layer repeatedly.}
		\State $(p, B') \gets \textproc{Partition}(\mathit{buckets}.\textproc{last}())$
		\If{$B'\neq\emptyset$}
		\State $\mathit{buckets}.\textproc{pushBack}(B')$
		\State $\mathit{pivots}.\textproc{pushBack}(p)$
		\EndIf
		\EndWhile
		\State $x \gets \mathit{buckets}.\textproc{last}().\textproc{popBack}()$
		\State $\mathit{buckets}.\textproc{popBack}()$
		\State $\mathit{pivots}.\textproc{popBack}()$
		\State \Return $x$
		\EndProcedure
	\end{algorithmic}
	\begin{algorithmic}[1]
		\Procedure{partition}{$B$}
		\State $(i,p) \gets \textproc{SelectPivot}(B)$ \Comment{$p=B[i]$}
		\State $k \gets 0$
		\State $B' \gets \texttt{Vec}\langle T \rangle()$
		\For{$j\in [1,B.\textproc{len}()]$}
		\If{$B[j] < p$ \textbf{or} $(j>i \textbf{ and } B[j] \leq p)$}
		\State $B'.\textproc{pushBack}(B[j])$
		\Else
		\State $k \gets k+1$
		\State $B[k] \gets B[j]$
		\EndIf
		\EndFor
		\State $B.\textproc{Resize}(k)$
		\State \Return $(p, B')$
		\EndProcedure
	\end{algorithmic}
\end{algorithm}

\parag{Equal elements.} If all elements in a bucket are equal, we need to ensure that partitioning terminates and does not degenerate to splitting off one element at a time.
To achieve this, elements in the bucket that are on the left of the pivot are
moved down when they are \emph{strictly} smaller than the pivot, while elements
on the right of the pivot are moved down when they are smaller \emph{or equal}.
This way, in expectation half the (non-pivot) elements are moved to the new bucket.
Since the pivot itself remains in its bucket, we maintain the invariant that no bucket is ever empty, unless the data structure is completely empty, in which case there is only one empty bucket.
Alternatively, one could use dedicated buckets for elements equal to the pivot, but we did not pursue this strategy further.
Note that our strategy allows for elements equal to the pivot $p_i$ to be in buckets $B_i$ and $B_{i+1}$.

\enlargethispage{1em}
\parag{Pivot selection.} As in quicksort, the performance of the SimdQuickHeap depends on the pivot quality.
The median would be the ideal pivot, as it splits the bucket into two equal parts.
Experimentally, the median of three random samples offers good performance across all tested workloads.
In particular, it is typically a few percent faster than choosing a random pivot or the median of five.
Stronger strategies, such as the true median or median-of-medians~\cite{blumTimeBoundsSelection1973}, did not justify their additional overhead.

\parag{Rebalancing.}
The SimdQuickHeap does not suffer from degenerate inputs as much as the original QuickHeap (see \cref{sec:quickheap}):
While the original QuickHeap ``bubbles down'' elements to insert them, the SimdQuickHeap uses binary search to classify elements, which takes $\BigO(\log n)$ time no matter the number of pivots.
Thus, we do not employ any rebalancing strategies in our implementation.
However, we will consider rebalancing in future work, since it still has many potential benefits:
When there are $\BigO(\log n)$ buckets, the binary search has complexity
$\BigO(\log \log n)$ \cite{partition-based-simple-heaps}.
Alternatively, this allows for a simple linear scan in $\BigO(\log n)$.

\parag{I/O-complexity.}
When the number of layers is indeed $\BigO(\log n)$, the SimdQuickHeap has the
same amortized I/O-complexity as the QuickHeap: $\BigO(\tfrac 1B \log_2\tfrac
	nM)$ per operation, assuming that the first size-$B$ block of each bucket fits in the cache, i.e., $M =
	\Omega(B \log n)$.
Specifically, classifying a pushed element then does not require any memory transfers
since the pivots are in memory, and appending it to a block incurs amortized
$\BigO(\tfrac 1B)$ transfers by buffering the last block of elements appended to
each bucket.
The memory access patterns for the partitioning step are similar to the original
QuickHeap, and the original analysis applies.

\subsection{Amortized analysis}
For simplicity, we also assume that the true median is used as a pivot (as this
can be found in linear time), and that no two elements are equal.
Using the median of three random elements yields the same bounds for \Op{push} and, in expectation over the random pivot selection, also for \Op{pop}.
This can be shown with a similar, more technical, proof that uses the fact that median of three random elements gives a constant-factor split in expectation.
\begin{theorem}
	For the \textsf{SimdQuickHeap} containing \(n\) elements, the \Op{push} operation takes \(\BigO(\log n)\) amortized and worst-case time, the \Op{pop} operation takes \(\BigO(1)\) amortized time.
\end{theorem}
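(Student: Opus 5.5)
We argue with a potential function, assuming (as the paragraph before the statement does) that every pivot is the exact median of its bucket and that all elements are distinct. Write $n$ for the current size and $\ell$ for the number of buckets. The key quantity is the \emph{depth} of an element: the number of partition steps it can still take part in before it is popped.

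\textbf{Worst-case cost of \Op{push}.} This part is direct. \textproc{classify} is a binary search over the pivot array. Every bucket is non-empty, except in the empty-structure case (see the invariant in the paragraph on equal elements). Hence $\ell \le n+1$, and the binary search costs $\BigO(\log \ell) = \BigO(\log n)$ comparisons. The \textproc{pushBack} onto the bucket vector costs amortized $\BigO(1)$ by standard doubling, or worst-case $\BigO(1)$ with incremental reallocation. So \Op{push} is $\BigO(\log n)$ in the worst case, and therefore also amortized, provided the potential increase we charge to it is also $\BigO(\log n)$.

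\textbf{Potential for \Op{pop}.} The cost of a \Op{pop} is dominated by the partition steps, each of which is linear in the size of the bottom bucket. With median pivots, each partition of a bucket of size $s$ costs $\BigO(s)$ and produces a new bottom bucket of size about $s/2$. Each element taking part in such a step pays $\BigO(1)$. We define
\[
\Phi = c\sum_{x\in\mathcal{Q}} \log_2\bigl(1+|B(x)|\bigr),
\]
where $B(x)$ is the bucket containing $x$.
\begin{itemize}
\item A median split of a bucket of size $s$ halves the bucket of every participating element. This lowers $\Phi$ by about $c\,s$, which pays for the $\BigO(s)$ partition work when $c$ is large enough.
\item Removing the final single-element bucket costs $\BigO(1)$ and does not increase $\Phi$.
\item A \Op{push} into bucket $B_i$ adds one new term of at most $c\log_2(1+n)$.
\end{itemize}

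\textbf{Main obstacle.} The problem with this $\Phi$ is that a \Op{push} also enlarges $|B_i|$ for the elements already there. That raises each of their logarithms slightly, by about $1/|B_i|$ each, for a total increase of $\BigO(c)$. This is acceptable: the increase is charged to the push, and $\BigO(\log n)+\BigO(1)=\BigO(\log n)$.

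There is a subtler issue: after many pushes, a bucket deeper in the structure may no longer be halved relative to its neighbours. The potential only ever measures the current bucket size, however, so every split of the current bottom bucket still halves the sizes of the participating elements. The argument never needs global balance.

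Finally, $\Phi$ must not be allowed to become negative or to increase during \Op{pop}. Both hold: each term is nonnegative, and popping only deletes terms and splits buckets. Summing over a sequence of operations gives amortized $\BigO(\log n)$ per \Op{push} and $\BigO(1)$ per \Op{pop}. The step we expect to check most carefully is the bookkeeping of the $\log(1+|B|)$ increments when a bucket grows between partitions, since the amortization rests on it.
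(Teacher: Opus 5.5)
Your proof is correct and is essentially the paper's own argument. Your potential $c\sum_{x}\log_2(1+|B(x)|)=c\sum_i |B_i|\log_2(1+|B_i|)$ is the paper's $c\sum_i(|B_i|+1)\ln(|B_i|+1)$ apart from the lower-order term $c\sum_i\ln(|B_i|+1)$ and the base of the logarithm, and your cases match the paper's computations: a new term plus $O(c)$ of increments on push, an $\Omega(cs)$ drop per median split, and free removal of the final singleton. The only differences are cosmetic: you charge each split step locally instead of summing the geometric series over the whole pop, and you explicitly note that a deamortized append is needed for a true worst-case push bound.
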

\begin{proof}
	The \Op{push} operation can classify the element to insert in \(\BigO(\log n)\) time via a binary search on the pivots, since there are at most \(\ell \leq n-1\) pivots.
	Appending the element to the correct bucket takes constant time, resulting in a total worst-case run time of $\BigO(\log n)$.

	For the amortized run time analysis, we define the \emph{potential function}
	\[
		\Phi \coloneq c\cdot \sum_{i=1}^\ell (\lvert B_i\rvert+1)\ln (\lvert B_i \rvert + 1)
	\]
	for some sufficiently large constant \(c\).
	Let \(k_i=\lvert B_i \rvert + 1\) be the \emph{weight} of the bucket \(B_i\).
	Then, inserting an element into bucket \(B_i\) increases the potential by
	\begin{align*}
		\Delta_{\textrm{push},i}\Phi & = c(k_i+1)\ln(k_i+1) - ck_i\ln(k_i)                                              \\
		                             & = ck_i\ln\left(1+k_i^{-1}\right)+c\ln(k_i+1)                                     \\
		                             & \leq c(1 +\ln(k_i+1))                        & \textrm{using \(\ln(1+x)\leq x\)} \\
		                             & \leq 2c\ln(n+2) \in \BigO(\log n).
	\end{align*}
	Thus, the \Op{push} operation takes \(\BigO(\log n)\) amortized time.

	The \Op{pop} operation first finds the median of the bottom bucket of size
	$k_\ell$ as pivot and partitions it, resulting in two buckets of equal size $k_\ell/2$.
	This takes \(\BigO(k_\ell)\) time and changes the potential (ignoring rounding errors) by
	\begin{align*}
		\Delta_{\textrm{part}}\Phi & = 2\left(c\frac{k_\ell}{2}\ln\left(\frac{k_\ell}{2}\right)\right) - ck_\ell\ln(k_\ell) = -ck_\ell\ln(2).
	\end{align*}
	Repeatedly partitioning until the bottom bucket contains only one element changes the potential (again, ignoring rounding errors) by \(\Delta_{\textrm{pop}}\Phi=-c\ln(2)(k_\ell+\tfrac{k_\ell}{2}+\tfrac{k_\ell}{4}+\ldots+2)=-\Omega(ck_\ell)\) and takes \(\BigO(k_\ell)\) time.
	Thus, with \(c\) sufficiently large, the potential can ``pay'' for the \(\BigO(k_\ell)\) actual cost of the \Op{pop} operation, giving \(\BigO(1)\) amortized time.
\end{proof}

\subsection{Practical Optimizations}
Here, we describe practical optimizations that take advantage of the data layout of the \textsf{SimdQuickHeap}.
Let $W$ be the number of elements that fit into one SIMD register.
When the number of layers is indeed $\BigO(\log n)$, the optimizations result in an $\BigO(\frac 1W \log n)$ algorithm.

\parag{Classification.}
If the number of pivots is small, a linear scan can be faster than a binary search, thanks to better cache efficiency and branch predictions.
We therefore employ a SIMD-optimized linear scan unless the number of pivots surpasses some threshold (64 by default), above which we fall back to binary search.
In practice, the linear scan is used for all feasible inputs where the number of buckets is in $\BigO(\log n)$ with this threshold.
We compare $W$ consecutive pivots at once with SIMD, resulting in a run time of
\(\BigO(\tfrac 1W \log n)\) when the number of pivots is in \(\BigO(\log n)\).

\parag{Partitioning.}
Multiple partitioning schemes using SIMD have been proposed \cite{bramasNovelHybridQuicksort2017,wassenbergVectorizedPerformanceportableQuicksort2022a}.
We use a straightforward mechanism:
We load $W$ values from the bucket and compare them against the pivot at once in $\BigO(1)$ time.
Using AVX2 \texttt{permutevar}\footnote{See
	\url{https://lemire.me/blog/2017/04/10/removing-duplicates-from-lists-quickly/}.}
($\sim 3$ cycles) or AVX-512 \texttt{compress} ($\sim 6$ cycles) instructions,
we can then efficiently move the values smaller (or larger) than the pivot
to the front of the register.
Finally, we append these values to the back of the bucket array.
This way, partitioning a bucket of size $m$ takes $\BigO(m/W)$ time.

In practice, the constant overheads of this partitioning scheme become too large
when the bucket is small (of length $\BigO(W)$).
Thus, we stop the iterative partitioning when the bucket size falls below length
$16$.
To be able to delete the minimum in the last bucket efficiently, we keep it sorted in descending order as long as it is smaller than the threshold.
Consequently, when a new element is inserted into the last bucket and the bucket size is below the threshold, the element is inserted at the appropriate position using a linear scan.

\parag{Further implementation details.}
We pre-allocate the pivot array with 128 entries, which is enough for most non-degenerate workloads.
We do not pre-allocate the bucket arrays but ensure that the capacity is always a multiple of the SIMD width to avoid special cases when reading the last elements.
Logically removed buckets are not freed but re-used when new buckets are needed.

\section{Experiments}\label{results}
The implementation of the SimdQuickHeap and the evaluations are written in Rust
and are available at \href{https://github.com/ragnargrootkoerkamp/quickheap}{github:ragnargrootkoerkamp/quickheap}.
We use an AMD Zen 4 EPYC 9684X (92 cores, 32 KiB L1 data cache, 1 MiB L2 cache, and 32+64 MiB L3 cache for each package of 8 cores) with a 12-channel DDR5 RAM running Rocky Linux 9.4 for the experiments.
All experiments are run using a single thread and repeated three times after one warm-up iteration.
We report the median of these runs but note that the measurements generally had very little variance.
We compile the rust code in release mode with \texttt{target-cpu=native} and enable full link-time optimization
with a single codegen unit to minimize the impact of cross-language function calls.
The \texttt{C++} code is compiled with \texttt{Clang} using the flags \texttt{-std=c++17 -O3 -DNDEBUG -march=native -flto}.
We further reimplemented some of the benchmarks in \texttt{C++} for the \texttt{C++} implementations, but measured no difference in performance.


\parag{Compared implementations.}
For \textsf{SimdQuickHeap}, we test both 256-bit AVX2 and 512-bit AVX-512 implementations.
We compare against a number of external implementations:
\begin{itemize}
  \item \textsf{BinaryHeap}: the Rust standard library \texttt{std::collections::BinaryHeap}.
  \item \textsf{Orx-$d$-aryHeap}: implementation from the \texttt{orx\_priority\_queue} Rust crate~\cite{orx-priority-queue}, the fastest $d$-ary heap rust implementation we found.
  \item \textsf{WeakHeap}: from the \texttt{weakheap} Rust crate~\cite{weak-heap-crate}.
  \item \textsf{RadixHeap}: the fastest radix heap we found, in \texttt{C++}~\cite{radix-heap-cpp}.
  \item \textsf{SequenceHeap}: the original \texttt{C++} implementation~\cite{sequence-heap,sequence-heap-code}.
  \item \textsf{S3qHeap}: the original super-scalar sample queue \texttt{C++} implementation~\cite{superscalar-queue,superscalar-queue-code}.
  \item \textsf{OriginalQuickHeap}: the Randomized QuickHeap implementation of \cite{stronger-quickheaps}.
\end{itemize}
Further implementations that are less efficient than the ones above are:
\hyphenation{Boost-Skew-Heap}
\begin{itemize}
  \item \textsf{$d$-aryHeap}: implementation from the \texttt{dary\_heap} Rust crate~\cite{dary-heap}.
  \item \textsf{Boost-$d$-ary heap}, \textsf{BoostFibonacciHeap}, \textsf{BoostPairingHeap}, \textsf{BoostBinomialHeap}, \textsf{BoostSkewHeap}: various heap implementations from the \texttt{C++} Boost library \cite{boost}.
  \item \textsf{RadixHeapMap}: from the \texttt{radix\_heap} Rust crate~\cite{radix-heap-crate}.
  \item \textsf{PairingHeap}: from the \texttt{pheap} Rust crate~\cite{pheap-crate}.
  \item \textsf{FibonacciHeap}: from the \texttt{fibonacci\_heap} Rust crate~\cite{fibonacci-heap-crate}.
  \item \textsf{ReimplementedQuickHeap}: our reimplementation of the QuickHeap.
\end{itemize}

We further implemented a \textsf{ScalarQuickHeap} variant of the \textsf{SimdQuickHeap} that does not use SIMD instructions and can be instrumented to count internal metrics such as the number of comparisons.

\begin{figure}[tp]
	\centering
	\includegraphics[width=\textwidth,keepaspectratio]{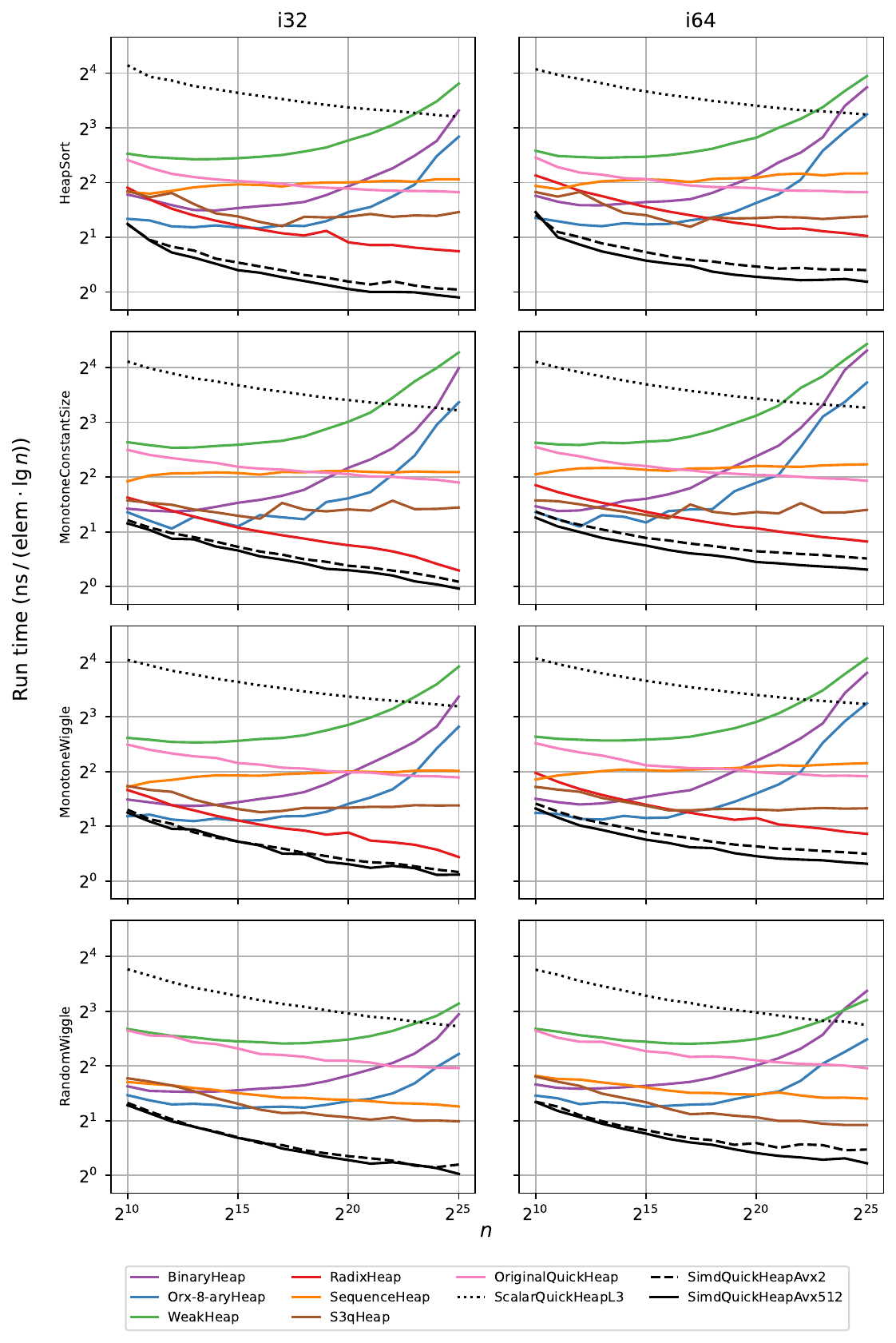}
	\caption{\label{fig:diffie-nanos}Log-log plots showing for a variety of workloads
		(rows, see \cref{sec:synth}), \texttt{i32} and \texttt{i64} inputs
		(columns), and increasing maximum number of elements stored in the data
		structure (x-axis), the time in nanoseconds to push and pop each element,
        divided by $\log_2 n$. That is, the time is relative to the $\Omega(\log_2 n)$ lower bound per
		operation.
	}
\end{figure}

\subsection{Synthetic benchmarks}\label{sec:synth}
We tested each implementation on several synthetic workloads, for both 32-bit and 64-bit keys.
We vary $n$, the maximum size of the heap, from $2^{10} = 1024$ to $2^{25} \approx 32$ million.

\parag{Workloads.} The \textsf{HeapSort} workload first generates $n$ random values and then measures how long it takes to push all of them and then pop all of them:
$\push^n\circ \pop^n$.
In the \textsf{RandomWiggle} workload, we grow the structure to size $n$ via a sequence of $3n$ operations $(\push\circ\pop\circ\push)^n$ and then empty it via $(\pop\circ\push\circ\pop)^n$.
The \textsf{MonotoneConstantSize} workload is initialized by growing the data structure to contain $n$ elements via $(\push\circ\pop\circ\push)^n$ as well.
Compared to just inserting $n$ values, this ensures that the internal state of each data structure is more randomized.
Then, we measure how long it takes to do $10 n$ pairs of $\pop\circ\push$ operations.
In \textsf{MonotoneWiggle} and \textsf{MonotoneConstantSize}, the pushed value is chosen as the last popped
value plus a uniform random constant between $0$ and $n$.
In the non-monotone \textsf{RandomWiggle}, the pushed value is a uniform random (32 or 64-bit) integer.
We note that this non-monotone case is degenerate, in that most values are pushed
to the front of the queue.

For each workload, the number $N$ of \Op{push} operations is the same as the number of \Op{pop} operations:
$N=n$ for \textsf{HeapSort}, $N=10n$ for
\textsf{MonotoneConstantSize}, and $N=3n$ for \textsf{MonotoneWiggle}
and \textsf{RandomWiggle}.
We report the average time \emph{per element}, which is the total running time divided by $N$.

\begin{table}[t]
	\caption{\label{tab:absolute}Average time in nanoseconds for pushing and popping each
      element for all implementations on the \texttt{MonotoneConstantSize} workload with \texttt{i64} keys, for increasing heap size $n$.
		The last column gives the number of L2 cache misses per element at the
        largest size $n=2^{25}$, excluding cache lines prefetched by the
        hardware prefetcher.
		The implementations above the rule are shown in \cref{fig:diffie-nanos}.
		The best value in each column is set in bold.}
	\centering
	\pgfplotstableread[col sep=comma]{table-sorted.dat}\datatable
	\pgfplotstablecreatecol[
		create col/set list={0,1,2,3,4,5,6,7,8,9,10,11,12,13,14,15,16,17,18,19,20,21}
	]{order}\datatable
	\pgfplotstabletypeset[
		sort, sort key={order}, sort cmp={float <},
		columns={label,nanos_1024,nanos_32768,nanos_1048576,nanos_33554432,%
				hw_cache_misses_33554432},
		fixed, fixed zerofill, precision=1, 1000 sep={\,},
		create on use/label/.style={create col/copy={displayname}},
		columns/label/.style={string type, column name={Implementation}, column type={l}},
		every column/.append style={column type={r}},
		highlight best/.style={
				postproc cell content/.append code={%
						\ifnum\pgfplotstablerow=#1\relax
							\pgfkeysalso{@cell content/.add={\bfseries\boldmath}{}}%
						\fi
					},
			},
		columns/nanos_1024/.style={column name={$n=2^{10}$}, highlight best=10},
		columns/nanos_32768/.style={column name={$2^{15}$}, highlight best=9},
		columns/nanos_1048576/.style={column name={$2^{20}$}, highlight best=9},
		columns/nanos_33554432/.style={column name={$2^{25}$}, highlight best=9},
		columns/hw_cache_misses_33554432/.style={column name={$n=2^{25}$}, precision=2, highlight best=21},
		every head row/.style={
				before row={\toprule
						& \multicolumn{4}{c}{Time (\si{\nano\second\per elem})}
						& \multicolumn{1}{c}{L2 misses/elem}\\
						\cmidrule(lr){2-5}\cmidrule(lr){6-6}
					},
				after row=\midrule,
			},
		every row no 9/.style={after row=\midrule},
		every last row/.style={after row=\bottomrule},
	]\datatable
\end{table}

\begin{figure}[tb]
	\centering
	\makebox[\linewidth]{
		\includegraphics[width=\linewidth]{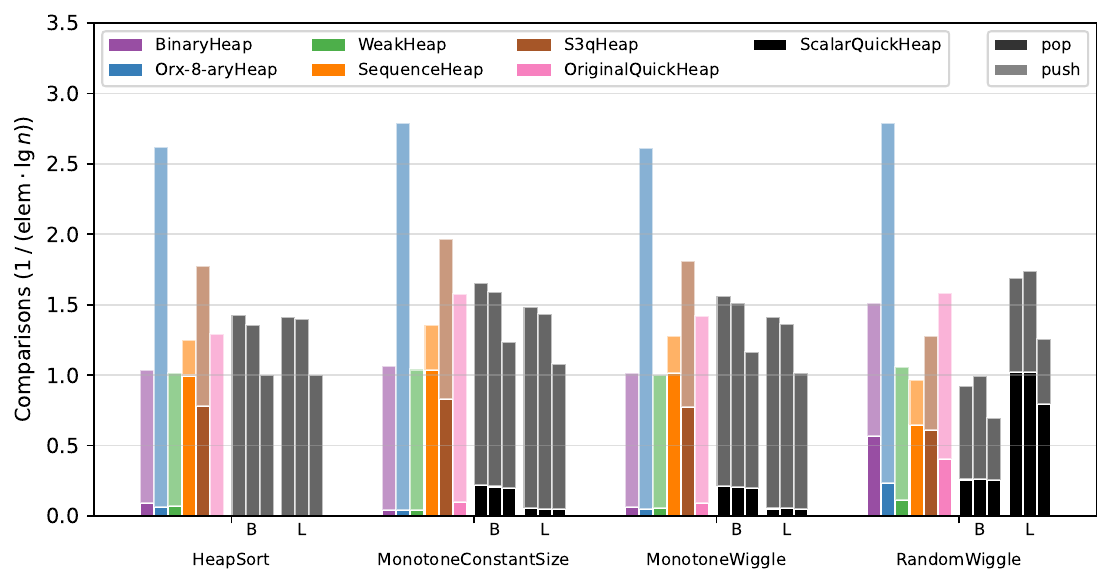}
	}
	\caption{\label{fig:comparisons}The number of comparisons per operation
		relative to the $\lg_2 n$ lower bound in different workloads. The number of comparisons during
		push operations is highlighted on the bottom, with the number of comparisons during pop
		operations shown on top. The SimdQuickHeap variants differ in using binary
		search (B) or linear search (L), and the pivot selection (in order: random, median of 3,
		true median oracle).}
\end{figure}

\parag{Results.}
\Cref{fig:diffie-nanos} shows the average time per element normalized by $\log_2 n$ for the best performing priority queue implementations on various workloads.
The normalization results in the time per \emph{fundamental operation}, since $\log_2$ is the minimum number of comparisons needed per element when the queue contains $n$ elements.

The absolute time per element for all implementations on the \texttt{MonotoneConstantSize} workload is given in \Cref{tab:absolute}.
We see that the \textsf{BinaryHeap} and \textsf{$d$-aryHeap} slow down
significantly as the data grows beyond the size of the CPU caches. The
\textsf{$d$-aryHeap} is consistently around $1.4\times$ as fast as the \textsf{BinaryHeap}.
For large n, the \textsf{WeakHeap} is surprisingly competitive with the \textsf{BinaryHeap}.

The \textsf{RadixHeap} benefits from high cache-locality and gets faster rather
than slower (relative to the lower bound) as $n$ grows. It is faster for 32-bit than for 64-bit data.
Similarly, as predicted by the theory, the \textsf{SequenceHeap} is I/O-efficient and does
not slow down on larger inputs.
The more modern and optimized \textsf{S3qHeap} is up to $2\times$
faster for 64-bit data.

Our reimplementation of the original QuickHeap is on-par with the sequence heap,
while the scalar version of our new heap is $3\times$ \emph{slower}, possibly
because the original QuickHeap uses more efficient in-place partitioning.
The \textsf{SimdQuickHeap} is around $8\times$ as fast as the scalar version.
Like the other engineered heaps, it gets faster relative to the lower bound as $n$ increases,
indicating that the constant overhead of each \Op{push} and \Op{pop} is relatively large compared to the $\log_2 n$ pivot steps required for each element.
The AVX-512 version is up to $1.2\times$ as fast as the AVX2 version, and up to $2\times$ as fast as the \textsf{S3qHeap}.
Surprisingly, it also significantly outperforms the \textsf{RadixHeap} in all tested cases.

The \textsf{SimdQuickHeap} reaches below $\log_2 n$ nanoseconds per $\pop\circ\push$ pair.
This means that it requires around $1$ nanosecond for each comparison in the lower bound.
Since each nanosecond corresponds to $3.7$ clock cycles, each of which can
contain multiple SIMD instructions
on $8$ words, the overhead of moving data around is still very large compared to just the comparisons.

\parag{Number of comparisons.}
On the \textsf{MonotoneConstantSize} workload (\cref{fig:comparisons}), the \textsf{BinaryHeap} needs less than $1.1 \log_2 n$ comparisons per element, while the \textsf{$8$-aryHeap} needs around $2.7 \log_2 n$ comparisons.
The \textsf{BinaryHeap} needs more comparisons on degenerate input, while the \textsf{WeakHeap} consistently needs $1.0\log_2 n$.
The \textsf{ScalarQuickHeap} needs $1.7 \log_2 n$ comparisons when using random pivots in combination with a linear scan over the pivots to insert elements.
This decreases to $1.2\log_2 n$ when using an oracle that gives the exact median for free, showing that there is some room for improvement by using a more accurate partitioning scheme.

\parag{Running time distribution.}
When running \textsf{SimdQuickHeap} with $n=2^{25}$ on the \textsf{MonotoneConstantSize} workload, $21\%$ of the time is spent pushing elements, with $14\%$ (of the total) scanning the list of pivots to find the right layer to append to.
The other $78\%$ of time is spent on pop: $59\%$ partitioning the input, $11\%$
finding the position of the smallest element in the bottom bucket, and $3\%$
removing and returning this element.

\parag{Memory usage.}
For most workloads, the total capacity of the vectors allocated by the \textsf{SimdQuickHeap} is around $5\times$ the space required just to hold the elements.
For the \textsf{MonotoneConstantSize} workload, this factor grows to around $11$.
The increased memory consumption is likely due to the higher number of operations ($10n$ vs.~$\leq3n$), so that the data structure has more time to accumulate large temporary buckets due to bad pivots.
In comparison, binary heaps and the original QuickHeap can be implemented on top of a growing vector, resulting in an overhead of just the geometric growing factor (typically 2).
Possible mitigations to lower the memory consumption would be to periodically shrink vectors with excessive capacity or to use a more memory-efficient data structure like a list of (reusable) blocks.

\begin{table}[t]
	\caption{\label{tab:graph-instances}Number of edges and vertices of the
		graph instances that have been used for benchmarking, as well as the
		median and maximal edge weight.
		$N_\text{D}$ and $N_\text{JP}$ are the numbers of processed queue elements during Dijkstra's and Jarn{\'i}k--Prim's algorithm, respectively.}
	\centering
	\begin{tabular}{llrrrrrrr}
		\toprule
		\multicolumn{1}{c}{Name} & \multicolumn{1}{c}{Description} & \multicolumn{1}{c}{$|V|$} & \multicolumn{1}{c}{$|E|$} & \multicolumn{1}{c}{max} & \multicolumn{1}{c}{median} & \multicolumn{1}{c}{max} & \multicolumn{1}{c}{$N_\text{D}$} & \multicolumn{1}{c}{$N_\text{JP}$} \\
		\multicolumn{1}{c}{} & \multicolumn{1}{c}{} & \multicolumn{1}{c}{($\times 10^6$)} & \multicolumn{1}{c}{($\times 10^6$)} & \multicolumn{1}{c}{degree} & \multicolumn{1}{c}{weight} & \multicolumn{1}{c}{weight} & \multicolumn{1}{c}{($\times 10^6$)} & \multicolumn{1}{c}{($\times 10^6$)} \\
		\toprule
		CAL                 & Rd. California  & 1               & 4               & 8      & 3115   & 54k    & 2               & 2               \\
		CTR                 & Rd. Central USA & 14              & 34              & 9      & 3467   & 54k    & 15              & 17              \\
		GER                 & Rd. Germany     & 20              & 41              & 9      & 39     & 62k    & 21              & 22              \\
		USA                 & Rd. USA         & 23              & 58              & 9      & 3473   & 92k    & 26              & 29              \\
		$\textrm{RHG}_{20}$ & Rand. hyperbolic & 1               & 10              & 90k    & 230    & 999    & 2               & 5               \\
		$\textrm{RHG}_{22}$ & Rand. hyperbolic & 4               & 41              & 941k   & 230    & 999    & 7               & 21              \\
		$\textrm{RHG}_{24}$ & Rand. hyperbolic & 16              & 159             & 595k   & 232    & 999    & 28              & 80              \\
		\bottomrule
	\end{tabular}
\end{table}

\begin{figure}[t]
	\centering
	\includegraphics[width=\linewidth,keepaspectratio]{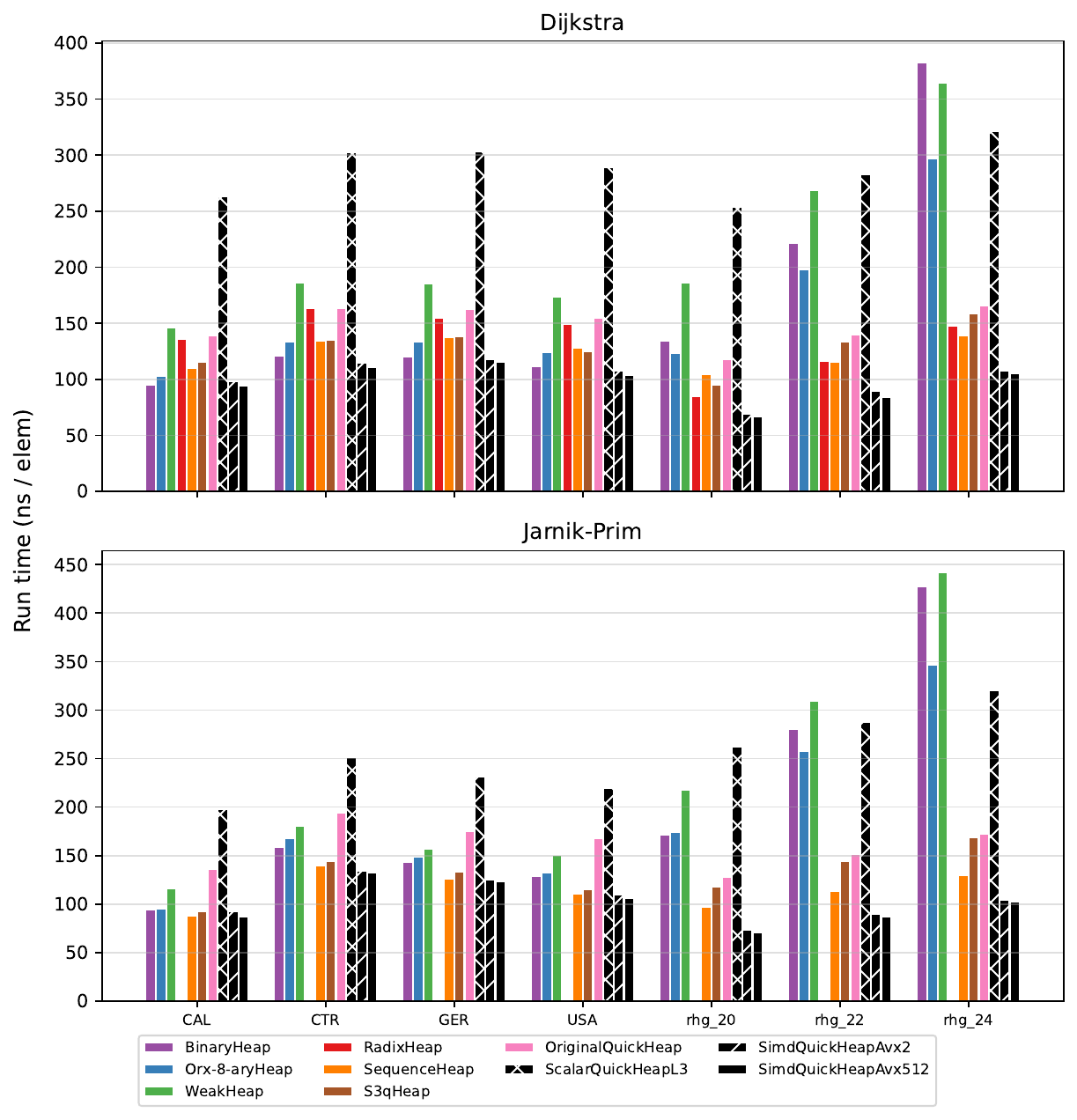}%
	\vspace{-1em}
	\caption{\label{fig:diffie-graphs}Time per processed queue element for the different heaps on multiple graph instances.
		The radix heap requires non-decreasing keys and is excluded from the Jarn{\'i}k--Prim benchmark.
	}
\end{figure}

\subsection{Graph benchmarks}
Two textbook graph algorithms that require a priority queue are Dijkstra's
algorithm~\cite{dijkstra59} to compute shortest paths on a weighted directed
graph with non-negative edge weights, and Jarn{\'i}k--Prim's
algorithm~\cite{jarnik30,prim57} to compute minimum-spanning-trees on weighted, undirected graphs.
Keys are non-decreasing for Dijkstra's algorithm, while keys may decrease for Jarn{\'i}k--Prim's algorithm, thus excluding the radix heap.
In this section, we compare running times of the two algorithms on multiple graph instances using different priority queues.

\parag{Setup.} We use 32-bit identifiers for vertices/edges and 32-bit non-negative edge weights.
We pack these into a single 64-bit unsigned integer that we use as elements for the priority queues.
For Dijkstra's algorithm, we re-insert a vertex when a new shorter path is found and discard stale versions of the vertex when popped.
We do not implement a variant that updates vertices instead of re-inserting them since re-inserting generally results in better performance \cite{dijkstra-decrease-key} and no priority queue implementation that supports \Op{decreaseKey} offers competitive performance in the synthetic benchmarks (see \Cref{tab:absolute}).

\parag{Graph types.}
We used different types of graphs to benchmark the performance of the
SimdQuickHeap (\cref{tab:graph-instances}).
We test on four road networks of varying sizes with edge weights representing travel times, all but the German\footnote{\url{https://i11www.iti.kit.edu/resources/roadgraphs.php}} network downloaded from 9th DIMACS implementation challenge\footnote{\url{http://www.diag.uniroma1.it/challenge9/download.shtml}}.
We also test on random hyperbolic graphs generated with KaGen~\cite{funkeCommunicationfreeMassivelyDistributed2019} with $2^{20}$, $2^{22}$, and $2^{24}$
nodes, an average degree of 16 and a power law exponent of $\gamma = 2.3$.
Edge weights represent the hyperbolic distance.

\parag{Results.}
We report the average running time per processed queue element, i.e., the number of elements that are pushed into and subsequently popped from the queue.
In \cref{fig:diffie-graphs}, we see that the SimdQuickHeap is the fastest on all graphs for both Dijkstra and Jarn{\'i}k--Prim workloads.
Speedups on road networks are small since the cache misses involved with traversing the graph are likely the bottleneck.
On the hyperbolic graphs, we achieve speedup factors of up to $4\times$ compared to the \textsf{binary heap}.

\section{Conclusion}\label{conclusion}
Due to its conceptually simple design, the SimdQuickHeap allows an efficient implementation using SIMD instructions that has both a good theoretical complexity and great performance in practice.
As predicted by the improved I/O-complexity, the SimdQuickHeap vastly outperforms the binary and $d$-ary heap, as well as the weak heap.
This new design gives a $4\times$ speedup over I/O-efficient structures such as the sequence heap (2000)~\cite{sequence-heap} and QuickHeap (2010)~\cite{quickheap}, and a $1.4\times$ to $2.8\times$ speedup over the much more recent superscalar sample queue~\cite{superscalar-queue}.
Maybe more surprisingly, the SimdQuickHeap is also up to $1.7\times$ as fast as the non-comparison-based radix heap.

\parag{Future work.}
Future work will be to implement the recently introduced rebalancing strategies
of~\cite{partition-based-simple-heaps} to limit the number of buckets,
and to evaluate if and how much they affect performance in both (currently) degenerate and non-degenerate cases.
This will also make the running time and I/O-complexity hold unconditionally.
Furthermore, the current $\BigO(\frac 1B \log_2 \frac nM)$ I/O-complexity is missing the more efficient $\log_{M/B}$ that is obtained by multi-way merging/splitting in the sequence heap and superscalar sample queue, and so the question is whether the QuickHeap naturally extends to multi-way splitting.

Another option is to make a SIMD-optimized version of the radix-heap, as it has a similar structure with exponentially growing buckets.

From the engineering side, optimizing the partitioning might provide further gains.
A drawback of the SimdQuickHeap is the larger memory usage compared to the original QuickHeap and simple binary heaps.
Using a list of reusable blocks rather than single vectors could reduce memory usage.
Furthermore, the implementation could be expanded to support key-value pairs that are both 64 bits,
bulk-insertions, and multi-threading.


\enlargethispage{2em}
\bibliographystyle{plainurl}
\bibliography{bibliography}

@inproceedings{array-heap,
  title = {An Experimental Study of Priority Queues in External Memory},
  booktitle = {Algorithm Engineering},
  author = {Brengel, Klaus and Crauser, Andreas and Ferragina, Paolo and Meyer, Ulrich},
  year = {1999},
  pages = {345--359},
  publisher = {Springer Berlin Heidelberg},
  issn = {1611-3349},
  doi = {10.1007/3-540-48318-7_27},
  isbn = {978-3-540-48318-2}
}

@article{b-heap,
  title = {You're Doing It Wrong},
  author = {Kamp, Poul-Henning},
  year = {2010},
  month = jul,
  journal = {Communications of the ACM},
  volume = {53},
  number = {7},
  pages = {55--59},
  publisher = {Association for Computing Machinery (ACM)},
  issn = {1557-7317},
  doi = {10.1145/1785414.1785434}
}

@inproceedings{back-to-basics-priority-queues,
  title = {A Back-to-Basics Empirical Study of Priority Queues},
  booktitle = {2014 Proceedings of the Sixteenth Workshop on Algorithm Engineering and Experiments ({{ALENEX}})},
  author = {Larkin, Daniel H. and Sen, Siddhartha and Tarjan, Robert E.},
  year = {2013},
  month = dec,
  pages = {61--72},
  publisher = {{Society for Industrial and Applied Mathematics}},
  doi = {10.1137/1.9781611973198.7},
  isbn = {978-1-61197-319-8}
}

@inproceedings{benchmarking-weak-heaps,
  title = {Policy-Based Benchmarking of Weak Heaps and Their Relatives,},
  booktitle = {Experimental Algorithms},
  author = {Bruun, Asger and Edelkamp, Stefan and Katajainen, Jyrki and Rasmussen, Jens},
  year = {2010},
  pages = {424--435},
  publisher = {Springer Berlin Heidelberg},
  issn = {1611-3349},
  doi = {10.1007/978-3-642-13193-6_36},
  isbn = {978-3-642-13193-6}
}

@article{binomial-heap,
  title = {A Data Structure for Manipulating Priority Queues},
  author = {Vuillemin, Jean},
  year = {1978},
  month = apr,
  journal = {Communications of the ACM},
  volume = {21},
  number = {4},
  pages = {309--315},
  publisher = {Association for Computing Machinery (ACM)},
  issn = {1557-7317},
  doi = {10.1145/359460.359478}
}

@article{binomial-heap-analysis,
  title = {Implementation and Analysis of Binomial Queue Algorithms},
  author = {Brown, Mark R.},
  year = {1978},
  month = aug,
  journal = {SIAM Journal on Computing},
  volume = {7},
  number = {3},
  pages = {298--319},
  publisher = {Society for Industrial \& Applied Mathematics (SIAM)},
  issn = {1095-7111},
  doi = {10.1137/0207026}
}

@article{blumTimeBoundsSelection1973,
  title = {Time Bounds for Selection},
  author = {Blum, Manuel and Floyd, Robert W. and Pratt, Vaughan and Rivest, Ronald L. and Tarjan, Robert E.},
  year = {1973},
  month = aug,
  journal = {Journal of Computer and System Sciences},
  volume = {7},
  number = {4},
  pages = {448--461},
  issn = {0022-0000},
  doi = {10.1016/S0022-0000(73)80033-9},
  urldate = {2026-04-23}
}

@misc{boost,
  title = {Boost {{C}}++ Libraries},
  author = {{Boost}},
  year = {2026},
  url = {https://www.boost.org/}
}

@article{bramasNovelHybridQuicksort2017,
  title = {A {{Novel Hybrid Quicksort Algorithm Vectorized}} Using {{AVX-512}} on {{Intel Skylake}}},
  author = {Bramas, Berenger},
  year = {2017},
  month = oct,
  journal = {International Journal of Advanced Computer Science and Applications (ijacsa)},
  volume = {8},
  number = {10},
  issn = {2156-5570},
  doi = {10.14569/IJACSA.2017.081044},
  urldate = {2026-04-24},
  langid = {english}
}

@inproceedings{brodal-queue,
  title = {Worst-Case Efficient Priority Queues},
  booktitle = {Proceedings of the Seventh Annual {{ACM-SIAM}} Symposium on {{Discrete}} Algorithms},
  author = {Brodal, Gerth St{\o}lting},
  year = {1996},
  month = jan,
  series = {{{SODA}} '96},
  pages = {52--58},
  publisher = {{Society for Industrial and Applied Mathematics}},
  url = {https://dl.acm.org/doi/10.5555/313852.313883},
  urldate = {2026-07-08},
  isbn = {978-0-89871-366-4}
}

@inproceedings{bucket-heap,
  title = {Cache-Oblivious Data Structures and Algorithms for Undirected Breadth-First Search and Shortest Paths},
  booktitle = {Algorithm Theory - {{SWAT}} 2004},
  author = {Brodal, Gerth St{\o}lting and Fagerberg, Rolf and Meyer, Ulrich and Zeh, Norbert},
  year = {2004},
  pages = {480--492},
  publisher = {Springer Berlin Heidelberg},
  issn = {1611-3349},
  doi = {10.1007/978-3-540-27810-8_41},
  isbn = {978-3-540-27810-8}
}

@article{bucket-queue,
  title = {Algorithm 360: Shortest-Path Forest with Topological Ordering [{{H}}]},
  author = {Dial, Robert B.},
  year = {1969},
  month = nov,
  journal = {Communications of the ACM},
  volume = {12},
  number = {11},
  pages = {632--633},
  publisher = {Association for Computing Machinery (ACM)},
  issn = {1557-7317},
  doi = {10.1145/363269.363610}
}

@article{buffer-heap,
  title = {Cache-Oblivious Buffer Heap and Cache-Efficient Computation of Shortest Paths in Graphs},
  author = {Chowdhury, Rezaul A. and Ramachandran, Vijaya},
  year = {2018},
  month = jan,
  journal = {ACM Transactions on Algorithms},
  volume = {14},
  number = {1},
  pages = {1--33},
  publisher = {Association for Computing Machinery (ACM)},
  issn = {1549-6333},
  doi = {10.1145/3147172}
}

@inproceedings{buffer-heap-spaa,
  title = {Cache-Oblivious Shortest Paths in Graphs Using Buffer Heap},
  booktitle = {Proceedings of the Sixteenth Annual {{ACM}} Symposium on {{Parallelism}} in Algorithms and Architectures},
  author = {Chowdhury, Rezaul Alam and Ramachandran, Vijaya},
  year = {2004},
  month = jun,
  series = {{{SPAA04}}},
  pages = {245--254},
  publisher = {ACM},
  doi = {10.1145/1007912.1007949}
}

@article{cache-oblivious-algorithms,
  title = {Cache-Oblivious Algorithms},
  author = {Frigo, Matteo and Leiserson, Charles E. and Prokop, Harald and Ramachandran, Sridhar},
  year = {2012},
  month = jan,
  journal = {ACM Transactions on Algorithms},
  volume = {8},
  number = {1},
  pages = {1--22},
  publisher = {Association for Computing Machinery (ACM)},
  issn = {1549-6333},
  doi = {10.1145/2071379.2071383}
}

@inproceedings{cache-oblivious-priority-queue,
  title = {Cache-Oblivious Priority Queue and Graph Algorithm Applications},
  booktitle = {Proceedings of the Thiry-Fourth Annual {{ACM}} Symposium on {{Theory}} of Computing},
  author = {Arge, Lars and Bender, Michael A. and Demaine, Erik D. and {Holland-Minkley}, Bryan and Munro, J. Ian},
  year = {2002},
  month = may,
  series = {{{STOC02}}},
  pages = {268--276},
  publisher = {ACM},
  doi = {10.1145/509907.509950}
}

@article{cache-oblivious-priority-queue-extended,
  title = {An Optimal Cache-oblivious Priority Queue and Its Application to Graph Algorithms},
  author = {Arge, Lars and Bender, Michael A. and Demaine, Erik D. and Holland-Minkley, Bryan and Ian Munro, J.},
  year = {2007},
  month = jan,
  journal = {SIAM Journal on Computing},
  volume = {36},
  number = {6},
  pages = {1672--1695},
  publisher = {Society for Industrial \& Applied Mathematics (SIAM)},
  issn = {1095-7111},
  doi = {10.1137/s0097539703428324}
}

@inproceedings{cascade-heap,
  title = {Cascade Heap: {{Towards}} Time-Optimal Extractions},
  booktitle = {Computer Science -- Theory and Applications},
  author = {Babenko, Maxim and Kolesnichenko, Ignat and Smirnov, Ivan},
  year = {2017},
  pages = {62--70},
  publisher = {Springer International Publishing},
  issn = {1611-3349},
  doi = {10.1007/978-3-319-58747-9_8},
  isbn = {978-3-319-58747-9}
}

@inproceedings{complexity-of-priority-queue-maintenance,
  title = {The Complexity of Priority Queue Maintenance},
  booktitle = {Proceedings of the Ninth Annual {{ACM}} Symposium on {{Theory}} of Computing - {{STOC}} '77},
  author = {Brown, Mark R.},
  year = {1977},
  series = {Stoc '77},
  pages = {42--48},
  publisher = {ACM Press},
  doi = {10.1145/800105.803394}
}

@article{dary-heap,
  title = {Priority Queues with Update and Finding Minimum Spanning Trees},
  author = {Johnson, Donald B.},
  year = {1975},
  month = dec,
  journal = {Information Processing Letters},
  volume = {4},
  number = {3},
  issn = {0020-0190},
  doi = {10.1016/0020-0190(75)90001-0},
  urldate = {2023-01-04},
  langid = {english}
}

@inproceedings{decrease-key-lower-bound,
  title = {Why Some Heaps Support Constant-Amortized-Time Decrease-Key Operations, and Others Do Not},
  booktitle = {Automata, Languages, and Programming},
  author = {Iacono, John and {\"O}zkan, {\"O}zg{\"u}r},
  year = {2014},
  pages = {637--649},
  publisher = {Springer Berlin Heidelberg},
  issn = {1611-3349},
  doi = {10.1007/978-3-662-43948-7_53},
  isbn = {978-3-662-43948-7}
}

@inproceedings{decreasekey-expensive,
  title = {{{DecreaseKeys}} Are Expensive for External Memory Priority Queues},
  booktitle = {Proceedings of the 49th Annual {{ACM SIGACT}} Symposium on Theory of Computing},
  author = {Eenberg, Kasper and Larsen, Kasper Green and Yu, Huacheng},
  year = {2017},
  month = jun,
  series = {Stoc '17},
  pages = {1081--1093},
  publisher = {ACM},
  doi = {10.1145/3055399.3055437}
}

@misc{dijkstra-decrease-key,
  title = {Priority {{Queues}} and {{Dijkstra}}'s {{Algorithm}}},
  author = {Chen, Mo and Chowdhury, Rezaul Alam and Ramachandran, Vijaya and Roche, David Lan and Tong, Lingling},
  url = {https://www.cs.utexas.edu/ftp/techreports/tr07-54.pdf},
  year = 2007,
  langid = {english}
}

@article{dijkstra59,
  title = {A Note on Two Problems in Connexion with Graphs},
  author = {Dijkstra, Edsger W},
  year = {1959},
  month = dec,
  journal = {Numerische Mathematik},
  volume = {1},
  number = {1},
  pages = {269--271},
  publisher = {{Springer Science and Business Media LLC}},
  issn = {0945-3245},
  doi = {10.1007/bf01386390}
}

@article{engineering-weak-heaps,
  title = {Weak Heaps Engineered},
  author = {Edelkamp, Stefan and Elmasry, Amr and Katajainen, Jyrki},
  year = {2013},
  month = nov,
  journal = {Journal of Discrete Algorithms},
  volume = {23},
  pages = {83--97},
  publisher = {Elsevier BV},
  issn = {1570-8667},
  doi = {10.1016/j.jda.2013.07.002}
}

@article{equivalence-priority-queues-sorting,
  title = {Equivalence between Priority Queues and Sorting},
  author = {Thorup, Mikkel},
  year = {2007},
  month = dec,
  journal = {Journal of the ACM},
  volume = {54},
  number = {6},
  pages = {28},
  publisher = {Association for Computing Machinery (ACM)},
  issn = {1557-735X},
  doi = {10.1145/1314690.1314692}
}

@article{external-memory-priority-queue-decreasekey,
  title = {A Faster External Memory Priority Queue with {{DecreaseKeys}}},
  author = {Jiang, Shunhua and Larsen, Kasper Green},
  year = {2018},
  journal = {arXiv},
  doi = {10.48550/ARXIV.1806.07598},
  copyright = {arXiv.org perpetual, non-exclusive license}
}

@inproceedings{external-memory-priority-queue-optimal-insertions,
  title = {External-{{Memory Priority Queues}} with {{Optimal Insertions}}},
  booktitle = {33rd {{Annual European Symposium}} on {{Algorithms}} ({{ESA}} 2025)},
  author = {Brodal, Gerth St{\o}lting and Goodrich, Michael T. and Iacono, John and Lo, Jared and Meyer, Ulrich and Pagan, Victor and Sitchinava, Nodari and Svenning, Rolf},
  year = {2025},
  series = {Leibniz {{International Proceedings}} in {{Informatics}} ({{LIPIcs}})},
  volume = {351},
  pages = {5:1--5:14},
  publisher = {Schloss Dagstuhl -- Leibniz-Zentrum f{\"u}r Informatik},
  issn = {1868-8969},
  doi = {10.4230/LIPIcs.ESA.2025.5},
  urldate = {2026-07-08},
  isbn = {978-3-95977-395-9}
}

@article{fibonacci-heap,
  title = {Fibonacci Heaps and Their Uses in Improved Network Optimization Algorithms},
  author = {Fredman, Michael L. and Tarjan, Robert Endre},
  year = {1987},
  month = jul,
  journal = {Journal of the ACM},
  volume = {34},
  number = {3},
  pages = {596--615},
  publisher = {Association for Computing Machinery (ACM)},
  issn = {1557-735X},
  doi = {10.1145/28869.28874}
}

@misc{fibonacci-heap-crate,
  title = {Fibonacci Heap in Rust},
  author = {{xvi-xv-xii-ix-xxii-ix-xiv}},
  year = {2026},
  url = {https://github.com/xvi-xv-xii-ix-xxii-ix-xiv/fibonacci_heap}
}

@article{funkeCommunicationfreeMassivelyDistributed2019,
  title = {Communication-Free Massively Distributed Graph Generation},
  author = {Funke, Daniel and Lamm, Sebastian and Meyer, Ulrich and Penschuck, Manuel and Sanders, Peter and Schulz, Christian and Strash, Darren and {von Looz}, Moritz},
  year = {2019},
  month = sep,
  journal = {Journal of Parallel and Distributed Computing},
  volume = {131},
  pages = {200--217},
  issn = {0743-7315},
  doi = {10.1016/j.jpdc.2019.03.011},
  urldate = {2024-03-23}
}

@inproceedings{funnel-heap,
  title = {Funnel {{Heap-A Cache Oblivious Priority Queue}}},
  booktitle = {Algorithms and {{Computation}}},
  author = {Brodal, Gerth St{\o}lting and Fagerberg, Rolf},
  year = {2002},
  pages = {219--228},
  publisher = {Springer},
  doi = {10.1007/3-540-36136-7_20},
  isbn = {978-3-540-36136-7},
  langid = {english}
}

@inproceedings{fusion-trees,
  title = {{{BLASTING}} through the Information Theoretic Barrier with {{FUSION TREES}}},
  booktitle = {Proceedings of the Twenty-Second Annual {{ACM}} Symposium on {{Theory}} of Computing - {{STOC}} '90},
  author = {Fredman, M. L. and Willard, D. E.},
  year = {1990},
  series = {Stoc '90},
  pages = {1--7},
  publisher = {ACM Press},
  doi = {10.1145/100216.100217}
}

@article{heapsort,
  title = {Algorithm 232: {{Heapsort}}},
  author = {Williams, J.W.J.},
  year = {1964},
  month = jun,
  journal = {Communications of the ACM},
  volume = {7},
  number = {6},
  pages = {347--348},
  publisher = {Association for Computing Machinery (ACM)},
  issn = {1557-7317},
  doi = {10.1145/512274.3734138}
}

@article{improved-weak-heap,
  title = {The Weak-Heap Data Structure: {{Variants}} and Applications},
  author = {Edelkamp, Stefan and Elmasry, Amr and Katajainen, Jyrki},
  year = {2012},
  month = oct,
  journal = {Journal of Discrete Algorithms},
  volume = {16},
  pages = {187--205},
  publisher = {Elsevier BV},
  issn = {1570-8667},
  doi = {10.1016/j.jda.2012.04.010}
}

@article{io-complexity-sorting,
  title = {The Input/Output Complexity of Sorting and Related Problems},
  author = {Aggarwal, Alok and Vitter, Jeffrey, S.},
  year = {1988},
  month = sep,
  journal = {Communications of the ACM},
  volume = {31},
  number = {9},
  pages = {1116--1127},
  publisher = {Association for Computing Machinery (ACM)},
  issn = {1557-7317},
  doi = {10.1145/48529.48535}
}

@article{boruvka26,
  title = {{O jist{\'e}m probl{\'e}mu minim{\'a}ln{\'i}m (On a certain problem of minimization)}},
  author = {Bor{\r u}vka, Otakar},
  year = {1926},
  pages = {36--58},
  url = {https://dml.cz/handle/10338.dmlcz/500114},
  journal = {Práce Moravské přírodovědecké společnosti, sv. III, spis 3, 1926},
  urldate = {2026-07-08},
  langid = {czech}
}

@article{jarnik30,
  title = {{O jist{\'e}m probl{\'e}mu minim{\'a}ln{\'i}m (On a certain problem of minimization)}},
  author = {Jarn{\'i}k, Vojt{\v e}ch},
  year = {1930},
  pages = {57--63},
  url = {https://dml.cz//handle/10338.dmlcz/500726},
  journal = {Práce Moravské přírodovědecké společnosti, sv. VI., spis 4, 1930},
  urldate = {2026-07-08},
  langid = {czech}
}

@misc{log-funnel-heap,
  title = {The Logarithmic Funnel Heap: {{An}} Efficient Priority Queue for Extremely Large Sets},
  author = {Loeffeld, Christian},
  year = {2023},
  publisher = {arXiv},
  doi = {10.48550/ARXIV.1705.10648},
  copyright = {Creative Commons Attribution Non Commercial No Derivatives 4.0 International}
}

@article{monotone-priority-queue-review,
  title = {Exploring Monotone Priority Queues for {{Dijkstra}} Optimization},
  author = {Costa, Jonas and Castro, Lucas and {de Freitas}, Rosiane},
  year = {2025},
  month = sep,
  journal = {RAIRO - Operations Research},
  volume = {59},
  number = {5},
  pages = {2419--2436},
  publisher = {EDP Sciences},
  issn = {2804-7303},
  doi = {10.1051/ro/2025082}
}

@incollection{optimal-incremental-sorting,
  title = {Optimal {{Incremental Sorting}}},
  booktitle = {2006 {{Proceedings}} of the {{Workshop}} on {{Algorithm Engineering}} and {{Experiments}} ({{ALENEX}})},
  author = {Paredes, Rodrigo and Navarro, Gonzalo},
  year = {2006},
  month = jan,
  series = {Proceedings},
  pages = {171--182},
  publisher = {{Society for Industrial and Applied Mathematics}},
  doi = {10.1137/1.9781611972863.16},
  urldate = {2026-07-08}
}

@misc{orx-priority-queue,
  title = {Orx-Priority-Queue: {{Priority}} Queue Traits and Efficient d-Ary Heap Implementations},
  author = {Arikan, Ugur},
  year = {2023},
  url = {https://github.com/orxfun/orx-priority-queue/}
}

@article{pairing-heap,
  title = {The Pairing Heap: {{A}} New Form of Self-Adjusting Heap},
  author = {Fredman, Michael L. and Sedgewick, Robert and Sleator, Daniel D. and Tarjan, Robert E.},
  year = {1986},
  month = nov,
  journal = {Algorithmica. An International Journal in Computer Science},
  volume = {1},
  number = {1--4},
  pages = {111--129},
  publisher = {{Springer Science and Business Media LLC}},
  issn = {1432-0541},
  doi = {10.1007/bf01840439}
}

@article{partition-based-simple-heaps,
  title = {Partition-Based Simple Heaps},
  author = {Brodal, Gerth St{\o}lting and Iacono, John and Rysgaard, Casper Moldrup and Wild, Sebastian},
  year = {2026},
  journal = {arXiv},
  doi = {10.48550/ARXIV.2603.01206},
  copyright = {arXiv.org perpetual, non-exclusive license}
}

@misc{pheap-crate,
  title = {Pairing Heap},
  author = {{1crcbl}},
  year = {2021},
  url = {https://github.com/1crcbl/pheap-rs}
}

@article{prim57,
  title = {Shortest Connection Networks and Some Generalizations},
  author = {Prim, R. C.},
  year = {1957},
  journal = {The Bell System Technical Journal},
  volume = {36},
  number = {6},
  pages = {1389--1401},
  doi = {10.1002/j.1538-7305.1957.tb01515.x}
}

@incollection{priority-queue-survey,
  title = {A {{Survey}} on {{Priority Queues}}},
  booktitle = {Space-{{Efficient Data Structures}}, {{Streams}}, and {{Algorithms}}: {{Papers}} in {{Honor}} of {{J}}. {{Ian Munro}} on the {{Occasion}} of {{His}} 66th {{Birthday}}},
  author = {Brodal, Gerth St{\o}lting},
  year = {2013},
  pages = {150--163},
  publisher = {Springer},
  doi = {10.1007/978-3-642-40273-9_11},
  urldate = {2026-07-08},
  isbn = {978-3-642-40273-9},
  langid = {english}
}

@article{quickheap,
  title = {On Sorting, Heaps, and Minimum Spanning Trees},
  author = {Navarro, Gonzalo and Paredes, Rodrigo},
  year = {2010},
  month = mar,
  journal = {Algorithmica. An International Journal in Computer Science},
  volume = {57},
  number = {4},
  pages = {585--620},
  publisher = {{Springer Science and Business Media LLC}},
  issn = {1432-0541},
  doi = {10.1007/s00453-010-9400-6}
}

@article{quicksort,
  title = {Algorithm 64: {{Quicksort}}},
  author = {Hoare, C. A. R.},
  year = {1961},
  month = jul,
  journal = {Communications of the ACM},
  volume = {4},
  number = {7},
  pages = {321},
  publisher = {Association for Computing Machinery (ACM)},
  issn = {1557-7317},
  doi = {10.1145/366622.366644}
}

@article{radix-heap,
  title = {Faster Algorithms for the Shortest Path Problem},
  author = {Ahuja, Ravindra K. and Mehlhorn, Kurt and Orlin, James and Tarjan, Robert E.},
  year = {1990},
  month = apr,
  journal = {Journal of the ACM},
  volume = {37},
  number = {2},
  pages = {213--223},
  publisher = {Association for Computing Machinery (ACM)},
  issn = {1557-735X},
  doi = {10.1145/77600.77615}
}

@misc{radix-heap-cpp,
  title = {Radix-Heap},
  author = {Akiba, Takuya},
  year = {2015},
  url = {https://github.com/iwiwi/radix-heap}
}

@misc{radix-heap-crate,
  title = {Radix-Heap: {{Fast}} Monotone Priority Queues},
  author = {Pedersen, Mike},
  year = {2016},
  url = {https://github.com/mpdn/radix-heap}
}

@article{self-adjusting-heaps,
  title = {Efficiency of Self-Adjusting Heaps},
  author = {Sinnamon, Corwin and Tarjan, Robert E.},
  year = {2025},
  month = sep,
  journal = {ACM Transactions on Algorithms},
  volume = {21},
  number = {4},
  pages = {1--39},
  publisher = {Association for Computing Machinery (ACM)},
  issn = {1549-6333},
  doi = {10.1145/3708989}
}

@article{sequence-heap,
  title = {Fast Priority Queues for Cached Memory},
  author = {Sanders, Peter},
  year = {2000},
  month = dec,
  journal = {ACM Journal of Experimental Algorithmics},
  volume = {5},
  pages = {7},
  publisher = {Association for Computing Machinery (ACM)},
  issn = {1084-6654},
  doi = {10.1145/351827.384249}
}

@misc{sequence-heap-code,
  title = {Sequence Heap},
  author = {Sanders, Peter},
  year = {2000},
  url = {https://github.com/raphinesse/SequenceHeap}
}

@inproceedings{slim-smooth-heap-analysis,
  title = {A Tight Analysis of Slim Heaps and Smooth Heaps},
  booktitle = {Proceedings of the 2023 Annual {{ACM-SIAM}} Symposium on Discrete Algorithms ({{SODA}})},
  author = {Sinnamon, Corwin and Tarjan, Robert E.},
  year = {2023},
  month = jan,
  pages = {549--567},
  publisher = {{Society for Industrial and Applied Mathematics}},
  doi = {10.1137/1.9781611977554.ch24},
  isbn = {978-1-61197-755-4}
}

@article{stronger-quickheaps,
  title = {Stronger Quickheaps},
  author = {Navarro, Gonzalo and Paredes, Rodrigo and Poblete, Patricio V. and Sanders, Peter},
  year = {2011},
  month = jun,
  journal = {International Journal of Foundations of Computer Science},
  volume = {22},
  number = {04},
  pages = {945--969},
  publisher = {World Scientific Pub Co Pte Lt},
  issn = {1793-6373},
  doi = {10.1142/s0129054111008507}
}

@inproceedings{super-scalar-sample-sort,
  title = {Super Scalar Sample Sort},
  booktitle = {Algorithms -- {{ESA}} 2004},
  author = {Sanders, Peter and Winkel, Sebastian},
  year = {2004},
  pages = {784--796},
  publisher = {Springer Berlin Heidelberg},
  issn = {1611-3349},
  doi = {10.1007/978-3-540-30140-0_69},
  isbn = {978-3-540-30140-0}
}

@mastersthesis{superscalar-queue,
  title = {Superscalar Sample Queue: {{Engineering}} a Distribution-Based Priority Queue},
  author = {{von der Gr{\"u}n}, Raphael},
  year = {2021},
  url = {https://ae.iti.kit.edu/english/4296.php},
  school = {Karlsruhe Institute of Technology}
}

@misc{superscalar-queue-code,
  title = {Superscalar Sample Queue},
  author = {{von der Gr{\"u}n}, Raphael},
  year = {2021},
  url = {https://github.com/raphinesse/s3q}
}

@article{two-level-bucket-queue,
  title = {Shortest-Route Methods: 1. {{Reaching}}, Pruning, and Buckets},
  author = {Denardo, Eric V. and Fox, Bennett L.},
  year = {1979},
  month = feb,
  journal = {Operations Research},
  volume = {27},
  number = {1},
  pages = {161--186},
  publisher = {{Institute for Operations Research and the Management Sciences (INFORMS)}},
  issn = {1526-5463},
  doi = {10.1287/opre.27.1.161}
}

@inproceedings{van-emde-boas-trees,
  title = {Preserving Order in a Forest in Less than Logarithmic Time},
  booktitle = {16th Annual Symposium on Foundations of Computer Science (Sfcs 1975)},
  author = {{van Emde Boas}, P.},
  year = {1975},
  month = oct,
  publisher = {IEEE},
  doi = {10.1109/sfcs.1975.26}
}

@article{wassenbergVectorizedPerformanceportableQuicksort2022a,
  title = {Vectorized and Performance-Portable Quicksort},
  author = {Wassenberg, Jan and Blacher, Mark and Giesen, Joachim and Sanders, Peter},
  year = {2022},
  journal = {Software: Practice and Experience},
  volume = {52},
  number = {12},
  pages = {2684--2699},
  issn = {1097-024X},
  doi = {10.1002/spe.3142},
  urldate = {2026-04-24},
  copyright = {{\copyright} 2022 John Wiley \& Sons Ltd.},
  langid = {english}
}

@misc{weak-heap-crate,
  title = {Weak-Heap},
  author = {Starovoitov, Egor},
  year = {2022},
  url = {https://github.com/starovoid/weakheap}
}

@article{weak-heap-sort,
  title = {Weak-Heap Sort},
  author = {Dutton, Ronald D.},
  year = {1993},
  month = sep,
  journal = {Bit Numerical Mathematics},
  volume = {33},
  number = {3},
  pages = {372--381},
  publisher = {{Springer Science and Business Media LLC}},
  issn = {1572-9125},
  doi = {10.1007/bf01990520}
}

@article{xy-fast-trie,
  title = {Log-Logarithmic Worst-Case Range Queries Are Possible in Space {{$\Theta$}}({{N}})},
  author = {Willard, Dan E.},
  year = {1983},
  month = aug,
  journal = {Information Processing Letters},
  volume = {17},
  number = {2},
  pages = {81--84},
  publisher = {Elsevier BV},
  issn = {0020-0190},
  doi = {10.1016/0020-0190(83)90075-3}
}



\end{document}